\documentclass[conference]{IEEEtran}
\IEEEoverridecommandlockouts
\usepackage{cite}
\usepackage{amsmath,amssymb,amsfonts}
\usepackage{algorithmic}
\usepackage{graphicx}
\usepackage{textcomp}
\usepackage{xcolor}
\usepackage{cleveref}
\usepackage{multirow}


\newcounter{remarkcounter}

\setcounter{remarkcounter}{0}
\newenvironment{remark}
{
	\refstepcounter{remarkcounter} 
	\par \textit{Remark \theremarkcounter:}} 
{\par}


\newcounter{theoremcounter}

\setcounter{theoremcounter}{0}
\newenvironment{theorem}
{
	\refstepcounter{theoremcounter} 
	\par \textit{Theorem \thetheoremcounter:}} 

{\par}

\newenvironment{proof}
{
	\par \textit{Proof:}} 
{\hfill$\blacksquare$\par}

\def\BibTeX{{\rm B\kern-.05em{\sc i\kern-.025em b}\kern-.08em
    T\kern-.1667em\lower.7ex\hbox{E}\kern-.125emX}}
\setlength{\abovedisplayskip}{3pt} 
\setlength{\belowdisplayskip}{3pt} 
\begin{document}

\title{Dual-Channel Adaptive NMPC for Quadrotor under Instantaneous Impact and Payload Disturbances\\
\thanks{This work was supported in part by the National Natural Science Foundation of China under Grant 62473292 and Grant 62088101, and in part by the Shanghai Municipal Science and Technology
Major Project under Grant 2021SHZDZX0100.}
}

\author{\IEEEauthorblockN{1\textsuperscript{st} Xinqi Chen}
	\IEEEauthorblockA{\textit{CEIE}\\
		\textit{Tongji University}\\
		Shanghai 201804, China\\
		2331868@tongji.edu.cn}
	\and
	\IEEEauthorblockN{2\textsuperscript{nd} Xiuxian Li}
	\IEEEauthorblockA{\textit{CEIE, SRIAS}\\
		\textit{Tongji University}\\
		Shanghai 201804, China\\
		xli@tongji.edu.cn}
	\and
	\IEEEauthorblockN{3\textsuperscript{rd} Min Meng}
	\IEEEauthorblockA{\textit{CEIE, SRIAS}\\
		\textit{Tongji University}\\
		Shanghai 201804, China\\
		mengmin@tongji.edu.cn}

}
\maketitle
\begin{abstract}
Capturing target objects using the quadrotor has gained increasing popularity in recent years, but most studies focus on capturing lightweight objects. The instantaneous contact force generated when capturing objects of a certain mass, along with the payload uncertainty after attachment, will pose significant challenges to the quadrotor control. This paper proposes a novel control architecture, namely Dual-Channel Adaptive Nonlinear Model Predictive Control (DCA-NMPC), which cascades a nonlinear model predictive control with two lower-level model reference adaptive controllers and can resist drastic impact and adapt to uncertain inertial parameters. Numerical simulation experiments are performed for validation.
\end{abstract}

\section{Introduction}
Quadrotors have been extensively researched due to their promising application prospects \cite{kumar2012opportunities}. In recent years, numerous studies have focused on equipping quadrotors with rackets or nets to hit or capture balls, which are important for tasks involving the capture of falling targets \cite{muller2011quadrocopter,su2017catching,yu2023catch}. However, most of them focus on capturing lightweight balls, which hardly affect the quadrotor's control. When the quadrotor captures the object of a certain mass, like a mobile phone, tea cup, etc., the target object will generate instantaneous contact force on the quadrotor, changing its velocity and angular rate abruptly. Additionally, as the target object attaches to the quadrotor after capture, the mass, inertia matrix, and center of mass (CoM) position of the quadrotor are altered, and these disturbances undoubtedly pose challenges to the precise and timely control of the quadrotor. Therefore, how to react rapidly to disturbances and resist parameter uncertainty becomes crucial.

\begin{figure}[tbp]
	\centering
	\includegraphics[scale=0.7, trim=1cm 0.3cm 0.5cm 0cm, clip]{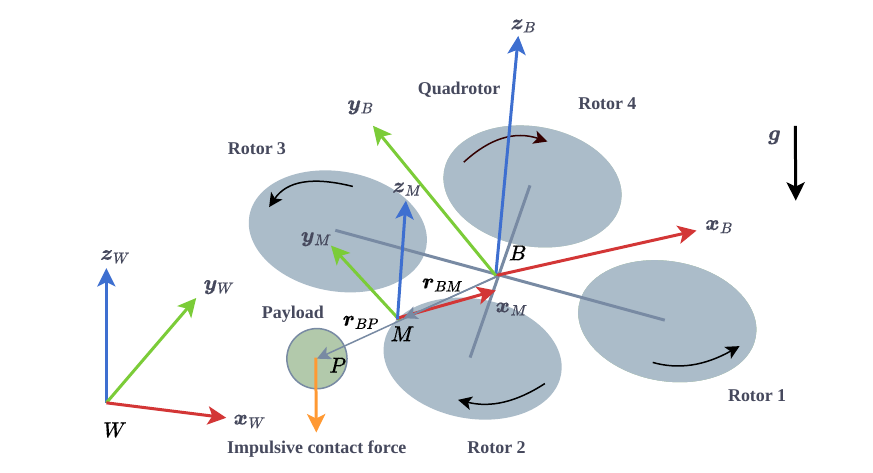}%
	\caption{Diagram of the quadrotor-payload system. Three coordinate frames are defined: world frame $W$, body frame $B$, and center of mass frame $M$. Impulsive contact force and payload uncertainty are considered in this work.}
	\label{fig:frame}
	\vspace{-15pt}
\end{figure}

Integrating uncertainties and external disturbances into a lump disturbance term for processing is common in controller design, which can be combined with the sliding mode controller \cite{hassani2024model} and the disturbance observer \cite{xu2024novel}. Such controllers are applicable to most disturbance rejection scenarios but lack the exploitation of the specific characteristics of the disturbances. \cite{jia2023evolver} combines the disturbance observer with the Koopman operator to learn disturbance dynamics online from data. However, the lifting function, crucial to control performance, lacks reliable design principles. Model reference adaptive control (MRAC) is widely applied to identification and control problems under parametric uncertainty \cite{nguyen2018model,lavretsky2012robust}. In \cite{maki2020model}, a nonlinear multi-input multi-output MRAC adapting to the changes of the inertia matrix and center of gravity position is presented. However, \cite{maki2020model} utilized the Euler angle to represent the attitude of the quadrotor, which will lead to loss of control due to singularity issues \cite{chaturvedi2011rigid} under large impact disturbances. In \cite{hanover2021performance}, $\mathcal{L}_1$ adaptive controller is employed to directly estimate and compensate for disturbances at the rotor thrust level, enhancing the disturbance rejection performance of the basic nonlinear model predictive control (NMPC) with minimal computational overhead.

In this work, to tackle the control challenges of the quadrotor under instantaneous impact and payload disturbances, we propose a Dual-Channel Adaptive Nonlinear Model Predictive Control (DCA-NMPC) framework that cascades NMPC with two lower-level model reference adaptive controllers, which combines the fast convergence and predictive capabilities of NMPC with the parameter adaptive abilities of MRAC, enabling the quadrotor to recover quickly. The primary contributions of this paper are outlined below:

\begin{itemize}
	\item  A nonsingular control approach based on quaternion attitude representation is proposed. By integrating quaternion attitude as part of the state into NMPC for optimization, it avoids the singularity issues encountered by controllers based on Euler angle representation \cite{maki2020model}.
	\item Explicit handling of uncertainties, rather than compensating for a lump disturbance term \cite{hassani2024model, xu2024novel}. Disturbances in translational dynamics are primarily caused by unknown payload mass, which are addressed by the proposed Thrust Model Reference Adaptive Controller (Thrust MRAC). Additionally, the Angular Rate Model Reference Adaptive Controller (Angular Rate MRAC) is introduced to compensate for changes in the inertia matrix and center of mass position.
	\item The actuator time-delay effect is considered. In NMPC, a first-order model is used to approximate the angular velocity loop with the delay effect, reducing the computational and parameter identification burden compared to directly controlling rotor thrust \cite{hanover2021performance}. Furthermore, in MRAC, a modified reference model is employed to enhance the system's robustness effectively.
\end{itemize}

In addition, several simulation experiments show that the proposed controller can quickly stabilize under disturbances and recover rapidly even during trajectory tracking.

The paper is organized as follows: Sec. \ref{sec:Quadrotor Dynamics} presents the quadrotor dynamics model and performs disturbance analysis. Sec. \ref{sec:controller} describes the proposed controller architecture in detail. In Sec. \ref{sec:simulation}, numerical simulation experiments verify the disturbance rejection effect of the proposed controller. Finally, Sec. \ref{sec:conclusion} summarizes the paper.

\section{Quadrotor Dynamics and Disturbance Analysis}\label{sec:Quadrotor Dynamics}

\subsection{Notation}

To model the effects of payload on the dynamics of the quadrotor, three right-handed coordinate systems are introduced: the world frame ${W}$: $\{\boldsymbol{x}_W,\boldsymbol{y}_W,\boldsymbol{z}_W\}$, the body frame $B$: $\{\boldsymbol{x}_B,\boldsymbol{y}_B,\boldsymbol{z}_B\}$ and the center of mass frame $M$ :$\{\boldsymbol{x}_M,\boldsymbol{y}_M,\boldsymbol{z}_M\}$, as shown in Fig. \ref{fig:frame}. The origin of the $B$ frame is defined at the geometric center of the body $x,y$-plane, with the $\boldsymbol{x}_B$ pointing forward and the $\boldsymbol{z}_B$ in the same direction as the collective thrust direction. The $M$ frame is located at the center of mass of the quadrotor, with the axes oriented in the same direction as the $B$ frame. For vector notation, the coordinate of point $Y$ in coordinate frame $X$ is represented by $\boldsymbol{r}_{XY}$. Unless otherwise specified, numerical subscripts are used to indicate the components in the corresponding positions, namely, $(\cdot)_{i}$ is the $i$-th component of the vector, and $(\cdot)_{ij}$ represents the element at the $i$-th row and $j$-th column of the matrix. Furthermore, the quadrotor orientation is represented by the unit quaternion $\boldsymbol{q}=[q_w,q_x,q_y,q_z]^{\intercal}\in\mathbb{S}^3$, and $\otimes$ represents the quaternion multiplication.
\subsection{Quadrotor Dynamics}
Assuming that the quadrotor is the rigid body, the quadrotor dynamics model can be obtained using the theory of rigid body dynamics \cite{sun2022comparative} as
\begin{align}
\dot{\boldsymbol{p}}&=\boldsymbol{v} \label{qmodel_1},
\\\dot{\boldsymbol{q}}&=\boldsymbol{q}\otimes\begin{bmatrix}0&\boldsymbol{\omega}^{\intercal}/2\end{bmatrix}^{\intercal} \label{qmodel_2},
\\\dot{\boldsymbol{v}}&={T\boldsymbol{z}_{B}}/m+\boldsymbol{g} \label{qmodel_3},
\\\dot{\boldsymbol{\omega}}&=
\boldsymbol{J}^{-1}[\boldsymbol{\tau}-\boldsymbol{\omega}\times \boldsymbol{J}\boldsymbol{\omega}] \label{qmodel_4},
\end{align}
where $\boldsymbol{p},\boldsymbol{v}$ denote the position and velocity of the geometric center of the quadrotor, $T,\boldsymbol{\tau}$ represent the collective thrust and body torque, $\boldsymbol{\omega}$ is the angular rate of $B$ frame with respect to $W$ frame, $\boldsymbol{J}$ denotes the inertia matrix of the quadrotor, $m$ is the mass of the quadrotor, and $\boldsymbol{g}=[0,0,-g]^\intercal$ represents the gravity vector.

According to the rotor positions and spin configurations of the quadrotor denoted in Fig \ref{fig:frame}, the relationship between the collective thrust $T$, the body torque $\boldsymbol{\tau}$ and the thrust of a single rotor can be expressed as
\begin{equation}
\left[\begin{array}{l}
	T \\
	\boldsymbol{\tau} \\
\end{array}\right]=\boldsymbol{G}\mathbf{f},
\end{equation}
where $\mathbf{f}=[f_1,f_2,f_3,f_4]^\intercal$,  $f_i$, $i=1,2,3,4$ represents the thrust generated by each rotor, and $\boldsymbol{G}$ is the control effectiveness matrix:
\begin{equation}
\boldsymbol{G}=\left[\begin{array}{cccc}
1 & 1 & 1 & 1 \\
-d_y& -d_y & d_y & d_y \\
-d_x & d_x & d_x & -d_x \\
-c_{\tau} & c_{\tau} & -c_{\tau} & c_{\tau}
\end{array}\right],
\end{equation}
where $d_x,d_y$ represent the distance from the rotor to the $\boldsymbol{x}_B,\boldsymbol{y}_B$ axis, respectively. $c_{\tau}$ represents the rotor drag torque constant.

Since the thrust generated by the rotor cannot track the input rotor thrust command instantaneously, similar to \cite{nan2022nonlinear}, the time-delay effect of the rotor actuator is modeled as a first-order model:
\begin{equation}
\dot{f}_i=\frac{1}{\sigma_a}(u_i-f_i),\quad i=1,2,3,4,
\end{equation}
where $\sigma_a$ is the rotor actuator time constant and $u_i$ is the thrust command input to the rotor.
\subsection{Effects of Impact Collision}
To analyze the effect of contact force generated by impact collision, assume that the quadrotor is a smooth cylinder and the payload is considered as a particle with a mass of $m_P$, which undergoes a completely inelastic collision with the quadrotor at the collision point $P$, as depicted in Fig. \ref{fig:frame}. Before the collision, the quadrotor center of mass frame $M$ and the body frame $B$ are aligned; therefore, during the collision analysis, $\boldsymbol{r}_{MP}$ is replaced by $\boldsymbol{r}_{BP}$ to avoid confusion.

The magnitude of the collision impulse is represented by ${I}$, and its direction points to the inside of the cylinder along the normal direction of the surface, represented by $\boldsymbol{n}_q$. According to the linear impulse-momentum theorem and the angular impulse-momentum theorem \cite{gilardi2002literature}, one can obtain that
\begin{align}
\boldsymbol{v^+}&=\boldsymbol{v^-}+\frac{I}{m}\boldsymbol{n}_q,\label{impact:v}\\
\boldsymbol{\omega^+}&=\boldsymbol{\omega^-}+I\boldsymbol{J}^{-1}\cdot(\boldsymbol{r}_{BP}\times\boldsymbol{n}_q),\label{impact:omega}
\end{align}
where the superscript $-$ indicates variables before the collision, and the superscript $+$ indicates variables after the collision. For the detailed analysis process and results, please refer to Appendix \ref{app:Impact}.

\subsection{Effects of Payload Uncertainty}
After the quadrotor collides with the payload, the payload is attached to the quadrotor body to form a quadrotor-payload system. Ignoring the relative acceleration of the $B$ frame and the $M$ frame, the system dynamics are similar to (\ref{qmodel_1}) to (\ref{qmodel_4}), except that (\ref{qmodel_3}) and (\ref{qmodel_4}) are modified as
\begin{align}
\dot{\boldsymbol{v}}&={T\boldsymbol{z}_{B}}/(m+m_P)+\boldsymbol{g} \label{smodel_3},
\\\dot{\boldsymbol{\omega}}&=
\boldsymbol{J}_S^{-1}[\boldsymbol{\tau}+\boldsymbol{\tau}_{CoM}-\boldsymbol{\omega}\times \boldsymbol{J}_S\boldsymbol{\omega}] \label{smodel_4},
\end{align}
where $\boldsymbol{J}_{S}$ is the inertia matrix of the quadrotor-payload system relative to the geometric center of the quadrotor, which is a symmetric semi-positive definite matrix, and $\boldsymbol{\tau}_{CoM}$ is the moment generated by gravity at the geometric center of the quadrotor, which is analyzed in the Appendix \ref{app:Payload}.

In summary, the impact collision induces abrupt changes in the velocity and angular rate of the quadrotor, while payload uncertainty causes disturbances within both translational and rotational dynamics.

\begin{figure}[bp]
	\vspace{-10pt}
	\centering
	\includegraphics[width=0.489\textwidth, trim=11cm 0cm 0.5cm 0cm, clip]{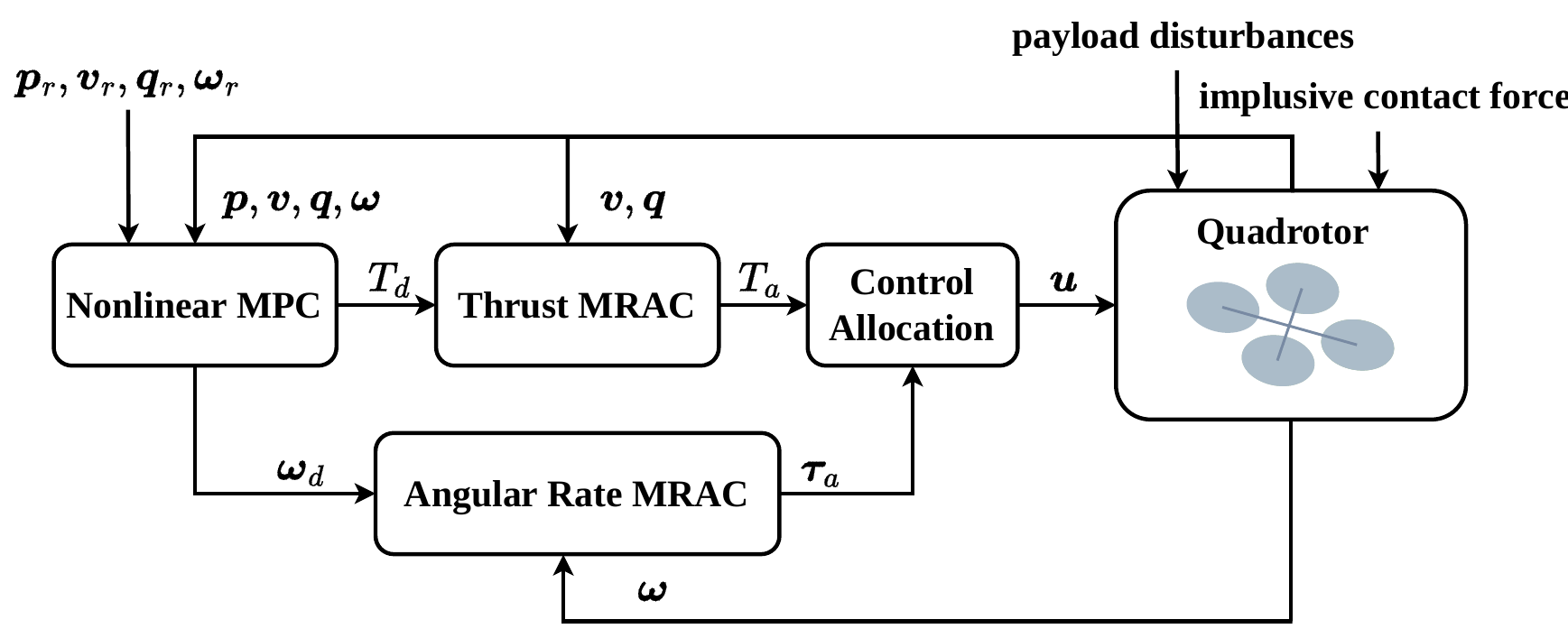}%
	\caption{Block diagram of proposed controller}
	\label{framework_fig}
\end{figure}

\section{Control Architecture}\label{sec:controller}
In this section, we propose a controller architecture called Dual-Channel Adaptive Nonlinear Model Predictive Control (DCA-NMPC) that cascades a nonlinear MPC to two adaptive controllers, which handle disturbances in translational dynamics and rotational dynamics, respectively. The block diagram of the proposed controller is shown in Fig. \ref{framework_fig}. The input-output relationship of each module is explained as follows:
\begin{itemize}
\item Nonlinear MPC: Outputing desired collective thrust command $T_d$ to Thrust MRAC and desired angular rate $\boldsymbol{\omega}_d$ to Angular Rate MRAC based on state information.
\item  Thrust MRAC: Scaling the desired collective thrust command $T_d$ to obtain the adaptive collective thrust command $T_a$.
\item Angular Rate MRAC: Outputing adaptive torque command $\boldsymbol{\tau}_a$ to track the desired angular rate $\boldsymbol{\omega}_d$.
\item Control Allocation: Allocating the adaptive combined thrust command $T_d$ and the adaptive torque command $\boldsymbol{\tau}_a$ to the rotor thrust $\boldsymbol{u}$.
\end{itemize}

\subsection{Nonlinear Model Predictive Control}

The state of the quadrotor is defined as $\boldsymbol{x}=[\boldsymbol{p}^\intercal,\boldsymbol{q}^\intercal,\boldsymbol{v}^\intercal,\boldsymbol{\omega}^\intercal]^\intercal$, and the input is $\boldsymbol{c}_d=[T_d,\boldsymbol{\omega_c}^\intercal]^\intercal$, where $T_d$ is the desired collective thrust command and $\boldsymbol{\omega_c}$ is the angular rate control command.

To take into account the tracking speed of the lower-level controller in the NMPC, the angular rate loop is approximated as the following first-order system:
\begin{equation}
\dot{\boldsymbol{\omega}}=-\frac{1}{\sigma_{rate}}(\boldsymbol{\omega}-\boldsymbol{\omega_c}) \label{ratemodel},
\end{equation}
where $\sigma_{rate}$ is the time constant of the angular rate loop, which is generally related to the reference system of the low-level adaptive controller.

The explicit 4th-order Runge-Kutta method is applied to discretize the system (\ref{qmodel_1}) to (\ref{qmodel_3}) and (\ref{ratemodel}) into $N$ time steps over the time horizon $H_t$, namely, taking the discretization step size $\delta_t=\frac{H_t}{N}$. The nonlinear optimal control problem is described as
\begin{equation}
\begin{aligned}
\min_{\boldsymbol{u}}\ \boldsymbol{\bar{x}}_{N}^{T}\boldsymbol{Q}\boldsymbol{\bar{x}}_{N}+\sum_{k=0}^{N-1}\boldsymbol{\bar{x}}_{k}^{T}\boldsymbol{Q}\boldsymbol{\bar{x}}_{k}+\bar{\boldsymbol{c}}_{d,k}^{T}\boldsymbol{R}\bar{\boldsymbol{c}}_{d,k}\\
\mathrm{subject~to}\quad\boldsymbol{x}_{k+1}=\boldsymbol{f}_{D}(\boldsymbol{x}_{k},\boldsymbol{c}_{d,k},\delta t)\\
\boldsymbol{x}_{0}=\boldsymbol{x}_{init} \quad
\boldsymbol{c}_{d,k}\in[\boldsymbol{c}_{min},\boldsymbol{c}_{max}]
\end{aligned}
\end{equation}
where $k$ represents the current time step, and $f_{D}$ denotes the discretized dynamic model. The notation $\overline{(\cdot)}:=(\cdot)-(\cdot)_{ref}$ represents the error between the current state and the reference state from the trajectory planner. $\boldsymbol{Q}:=\mathrm{diag}\left(\boldsymbol{Q}_p,\boldsymbol{Q}_v,\boldsymbol{Q}_q,\boldsymbol{Q}_\omega\right)$ is the positive-definite weight matrix of the state cost, and $\boldsymbol{R}$ represents the positive-definite weight matrix of the input cost. $\boldsymbol{c}_{min},\boldsymbol{c}_{max}$ are the lower and upper bounds on the input, respectively. Note that due to the particularity of the manifold $\mathbb{S}^{3}$ where the quaternion is located, the quaternion term in the cost function can be calculated by the following \cite{li2023nonlinear}
\begin{equation}
\bar{\boldsymbol{q}}_k^T\boldsymbol{Q}_q\bar{\boldsymbol{q}}_k=\mathrm{vec}(\boldsymbol{q}_e)^T\boldsymbol{Q}_q\mathrm{vec}(\boldsymbol{q}_e),
\end{equation}
where $\boldsymbol{q}_e=\boldsymbol{q}\otimes\boldsymbol{q}_{ref}^{-1}$ represents the error between quaternions, and $\mathrm{vec}(\cdot)$ means the vector part of the quaternion, namely, $\mathrm{vec}(\boldsymbol{q})=[q_x,q_y,q_z]^\intercal$. The above nonlinear optimal control problem can be solved in real time using the ACADOS toolkit \cite{verschueren2022acados}.

Although the solved $\boldsymbol{\omega_c}$ takes into account the delay of the angular rate loop, the change of $\boldsymbol{\omega_c}$ is too drastic to be tracked by the lower-level controller. Therefore, the angular rate $\boldsymbol{\omega}_{k+1}$ in the state $\boldsymbol{x_{k+1}}$ of the next time step is used as the desired angular rate signal $\boldsymbol{\omega_d}$ of the lower-level controller.

\begin{remark}
The nominal rotational dynamics model (\ref{qmodel_4}) is not utilized in NMPC because it is no longer valid under disturbances. Instead, the linear first-order model, being simple and computationally efficient, is more suitable for anti-disturbance scenarios. More importantly, it leverages the predictive capability of NMPC to mitigate the time-delay effect in the angular velocity loop with minimal overhead. Additionally, it eliminates the need to identify the parameters of the inertia matrix. 
\end{remark}

\subsection{Thrust Model Reference Adaptive Control}
As shown before, the thrust model reference adaptive control is used to reject disturbances in the collective thrust direction mainly caused by the unknown payload mass, as demonstrated by (\ref{smodel_3}). Scaling the desired thrust command from NMPC is considered for disturbance compensation:
\begin{equation}
T_a=k_tT_d \label{scaling},
\end{equation}
where $T_a$ is the adaptive collective thrust command, and $k_t$ is the scaling factor introduced to counteract the uncertainty of payload mass.

The nominal model (\ref{qmodel_3}) is adopted as the reference model:
\begin{equation}
\dot{\boldsymbol{v}}_m={T_d\boldsymbol{z}_{B}}/m+\boldsymbol{g}+\underbrace{\boldsymbol{L}_t\cdot({\boldsymbol{v}}-{\boldsymbol{v}}_m)}_{\text{Error Feedback Term}} \label{v_ref},
\end{equation}
In addition, the observer-like error feedback term $\boldsymbol{L}_t\cdot({\boldsymbol{v}}-{\boldsymbol{v}}_m)$ is added, where $\boldsymbol{L}_t$ is a positive definite diagonal matrix, which can improve transient performance \cite{lavretsky2012robust}.

The adaptive law is constructed as
\begin{equation}
\dot{k}_t=-\Gamma_t\boldsymbol{e}_v^\intercal\boldsymbol{t}_d
\label{kt_ada},
\end{equation}
where $\Gamma_t$ is the adaptation rate of $k_t$, ${\boldsymbol{e}}_v:=\boldsymbol{v}-\boldsymbol{v}_m$ represents the error between the reference model and the actual system model, and $\boldsymbol{t}_d=T_d\boldsymbol{z}_{B}$ is the desired thrust vector.

\begin{theorem}
The tracking error $\boldsymbol{e}_v$ between the reference system model (\ref{v_ref}) and the actual system model (\ref{smodel_3}) with the collective thrust control law (\ref{scaling}) and the adaption law (\ref{kt_ada}) asymptotically converges to zero.
\label{thm:1}
\end{theorem}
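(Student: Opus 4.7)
The plan is to show that under the adaptive law the mismatch between the actual and reference translational dynamics can be written as a scalar parameter error driving a stable linear dynamics, and then construct a quadratic Lyapunov function whose derivative cancels the parameter-error cross term.

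First I would identify the ideal matching parameter: setting $k_t^{\star}=(m+m_P)/m$ makes the applied force $k_t^{\star}T_d\boldsymbol{z}_B/(m+m_P)$ exactly equal to the reference-model force $T_d\boldsymbol{z}_B/m$. Defining the parameter error $\tilde{k}_t := k_t-k_t^{\star}$ (so $\dot{\tilde k}_t=\dot k_t$ since $k_t^{\star}$ is constant on each episode) and subtracting \eqref{v_ref} from \eqref{smodel_3} with $T=T_a=k_tT_d$, the error dynamics become
\begin{equation}
\dot{\boldsymbol{e}}_v=\frac{\tilde{k}_t}{m+m_P}\boldsymbol{t}_d-\boldsymbol{L}_t\boldsymbol{e}_v,
\end{equation}
which is a linear system in $\boldsymbol{e}_v$ driven by the unknown parameter error times the known regressor $\boldsymbol{t}_d=T_d\boldsymbol{z}_B$.

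Next I would propose the Lyapunov candidate
\begin{equation}
V=\tfrac{1}{2}\boldsymbol{e}_v^{\intercal}\boldsymbol{e}_v+\tfrac{1}{2\Gamma_t(m+m_P)}\tilde{k}_t^{2},
\end{equation}
differentiate along the closed-loop trajectories, and substitute the adaptation law \eqref{kt_ada}. The cross term $\tilde{k}_t\boldsymbol{e}_v^{\intercal}\boldsymbol{t}_d/(m+m_P)$ coming from $\boldsymbol{e}_v^{\intercal}\dot{\boldsymbol{e}}_v$ is designed to be cancelled exactly by the parameter-error term, leaving
\begin{equation}
\dot{V}=-\boldsymbol{e}_v^{\intercal}\boldsymbol{L}_t\boldsymbol{e}_v\le 0,
\end{equation}
since $\boldsymbol{L}_t\succ 0$. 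This immediately gives boundedness of $\boldsymbol{e}_v$ and $\tilde{k}_t$, and the integral $\int_0^{\infty}\boldsymbol{e}_v^{\intercal}\boldsymbol{L}_t\boldsymbol{e}_v\,dt$ is finite.

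To promote this to asymptotic convergence, I would invoke Barbalat's lemma applied to $\boldsymbol{e}_v^{\intercal}\boldsymbol{L}_t\boldsymbol{e}_v$: uniform continuity requires $\dot{\boldsymbol{e}}_v$ to be bounded, which in turn requires $\boldsymbol{t}_d$ (equivalently $T_d$) to be bounded. This is the main obstacle, since $T_d$ is generated by the outer NMPC loop rather than the adaptive loop itself; I would address it by noting that the NMPC thrust command is saturated within $[\boldsymbol{c}_{\min},\boldsymbol{c}_{\max}]$ as stated in the optimization problem, so $\boldsymbol{t}_d$ is automatically bounded. With this mild assumption, Barbalat's lemma yields $\boldsymbol{e}_v\to\boldsymbol{0}$, completing the proof. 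Note that $\tilde{k}_t$ is only guaranteed bounded, not convergent, which is standard for MRAC without a persistence-of-excitation condition on $\boldsymbol{t}_d$.
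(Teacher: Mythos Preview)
Your proposal is correct and follows essentially the same approach as the paper's own proof: the same matching condition $k_t^{\star}=(m+m_P)/m$, the same error dynamics, the same Lyapunov function (the paper writes the second term as $\frac{b}{2\Gamma_t}\tilde{k}_t^2$ with $b=1/(m+m_P)$, which is identical to yours), and the same application of Barbalat's lemma. If anything, your treatment is slightly more careful, since you explicitly justify the boundedness of $\boldsymbol{t}_d$ via the NMPC input constraints, a point the paper leaves implicit.
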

\begin{proof}
See Appendix \ref{app:1}.
\end{proof}

\begin{remark}
Different from the general designs of MRAC, the nominal translational dynamics (\ref{qmodel_3}) is utilized as the reference system. The purpose here is to ensure that the actual system's response matches the response of the model within the superior-level NMPC controller, thereby indirectly compensating for the effects of external disturbances. Additionally, because the magnitude is scaled, it is actually capable of resisting unknown disturbances in the direction of the collective thrust.
\end{remark}

\subsection{Angular Rate Model Reference Adaptive Control}
After loading, the uncertainty of the inertia matrix $\boldsymbol{J}_{S}$ and gravity moment $\boldsymbol{\tau}_{CoM}$ of the quadrotor-payload system will cause the baseline control performance to degrade rapidly. Inspired by \cite{maki2020model}, a Multiple-Input Multiple-Output angular rate model reference adaptive controller is proposed.

A first-order reference system with an observer-like error feedback term is introduced as
\begin{equation}
\dot{\boldsymbol{\omega}}_m=\boldsymbol{K}_\omega({\boldsymbol\omega}_d-\boldsymbol\omega_m)+\boldsymbol{L}_\omega({\boldsymbol\omega}-\boldsymbol\omega_m)
\label{omega_ref},
\end{equation}
where $\boldsymbol{K}_\omega$ and $\boldsymbol{L}_\omega$ are both positive definite diagonal weight matrices. $\boldsymbol{K}_\omega$ denotes the tracking speed of the reference signal to the desired one, while $\boldsymbol{L}_\omega$ takes into account the tracking capability of the actual system, avoiding the instantaneous increase of the tracking error and causing control divergence.

The unknown parameter vector to contain elements of the system's inertia matrix and gravity moment is defined as
\begin{equation}
\boldsymbol{\gamma}^*=[J_{S,11},J_{S,22},J_{S,33},J_{S,12},J_{S,23},J_{S,13},\tau_{CoM,1},\tau_{CoM,2}]^\intercal,
\end{equation}

Similar to \cite{maki2020model}, the cross product term $\boldsymbol{\omega}\times \boldsymbol{J}_{S}\boldsymbol{\omega}$ in (\ref{smodel_3}) is ignored and multiplied by $\boldsymbol{J}_{S}$ from the left to obtain the following simplified model:
\begin{equation}
\boldsymbol{J}_{S}\dot{\boldsymbol{\omega}}=\boldsymbol{\tau}_a+\boldsymbol{\tau}_{CoM}
\label{simplified_rot_model},
\end{equation}
where $\boldsymbol{\tau}_a$ is the adaptive body torque command.

Define ${\boldsymbol{e}_\omega}:=\boldsymbol{\omega}-\boldsymbol{\omega}_m$ and introduce the following representation matrix:
\begin{equation}
\boldsymbol{\varphi}_\omega^T=\begin{bmatrix}\phi_1&0&0&\phi_2&0&\phi_3&1&0\\0&\phi_2&0&\phi_1&\phi_3&0&0&1\\0&0&\phi_3&0&\phi_2&\phi_1&0&0\end{bmatrix}
\label{represent},
\end{equation}
where $\boldsymbol{\phi}=\boldsymbol{K}_\omega({\boldsymbol\omega}-\boldsymbol\omega_d)$. 

The following adaptive law is introduced as
\begin{equation}
\dot{\boldsymbol{\gamma}}=\boldsymbol{\Gamma}_\omega\boldsymbol{\varphi}_\omega\boldsymbol{e}_\omega 
\label{old_law_rate},
\end{equation}
where $\boldsymbol{\gamma}$ is the estimation of the unknown parameter vector $\boldsymbol{\gamma}^*$, and $\boldsymbol{\Gamma}_\omega$ is the adaptation rate of $\boldsymbol{\gamma}$.

Design the following control law:
\begin{equation}
\boldsymbol{\tau}_a=-\boldsymbol{\varphi}_\omega^T\boldsymbol{\gamma}-\boldsymbol{K}_p\boldsymbol{e}_\omega
\label{rot_control},
\end{equation}
where a proportional control term is also added to accelerate error convergence. $\boldsymbol{K}_p$ is a positive definite diagonal matrix, representing the proportional control gain.

\begin{theorem}
The tracking error $\boldsymbol{e}_\omega$ between the reference system model (\ref{omega_ref}) and the simplified system model (\ref{simplified_rot_model}) with the adaptive body torque control law (\ref{rot_control}) and the adaption law (\ref{old_law_rate}) asymptotically converges to zero.
\label{thm:2}
\end{theorem}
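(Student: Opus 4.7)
The strategy mirrors the Lyapunov-plus-Barbalat approach that underlies Theorem \ref{thm:1}, extended to the vector parameter adaptation characteristic of MRAC for rigid-body rotational dynamics. The first task is to verify the linear parameterization identity: direct inspection of (\ref{represent}) with $\boldsymbol{\phi} = \boldsymbol{K}_\omega(\boldsymbol{\omega} - \boldsymbol{\omega}_d)$ yields $\boldsymbol{\varphi}_\omega^T \boldsymbol{\gamma}^* = \boldsymbol{J}_S \boldsymbol{\phi} + \boldsymbol{\tau}_{CoM}$, which exploits the symmetry of $\boldsymbol{J}_S$ and the fact that $\tau_{CoM,3}$ is absent from $\boldsymbol{\gamma}^*$ (as justified by the geometry worked out in Appendix \ref{app:Payload}). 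This matching condition is the cornerstone without which no Lyapunov argument can close.

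Next, I would substitute the control law (\ref{rot_control}) into the simplified rotational dynamics (\ref{simplified_rot_model}) and subtract $\boldsymbol{J}_S$ times (\ref{omega_ref}) to obtain the closed-loop error dynamics
\begin{equation*}
\boldsymbol{J}_S \dot{\boldsymbol{e}}_\omega = -\boldsymbol{\varphi}_\omega^T \tilde{\boldsymbol{\gamma}} - \boldsymbol{K}_p \boldsymbol{e}_\omega - \boldsymbol{J}_S(\boldsymbol{K}_\omega + \boldsymbol{L}_\omega)\boldsymbol{e}_\omega,
\end{equation*}
where $\tilde{\boldsymbol{\gamma}} := \boldsymbol{\gamma} - \boldsymbol{\gamma}^*$. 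Then I would propose the composite Lyapunov function
\begin{equation*}
V = \tfrac{1}{2}\boldsymbol{e}_\omega^T \boldsymbol{J}_S \boldsymbol{e}_\omega + \tfrac{1}{2}\tilde{\boldsymbol{\gamma}}^T \boldsymbol{\Gamma}_\omega^{-1} \tilde{\boldsymbol{\gamma}}.
\end{equation*}
Differentiating along trajectories and substituting the adaptive law (\ref{old_law_rate}), the two cross terms $\boldsymbol{e}_\omega^T \boldsymbol{\varphi}_\omega^T \tilde{\boldsymbol{\gamma}}$ and $\tilde{\boldsymbol{\gamma}}^T \boldsymbol{\varphi}_\omega \boldsymbol{e}_\omega$ cancel (being transposes of a common scalar), leaving $\dot{V} = -\boldsymbol{e}_\omega^T \bigl[\boldsymbol{K}_p + \boldsymbol{J}_S(\boldsymbol{K}_\omega + \boldsymbol{L}_\omega)\bigr] \boldsymbol{e}_\omega$. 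Provided the symmetric part of the bracketed matrix is positive definite, this gives $\dot{V} \leq -\alpha \|\boldsymbol{e}_\omega\|^2$ for some $\alpha > 0$, so that $V \in \mathcal{L}_\infty$, hence $\boldsymbol{e}_\omega, \tilde{\boldsymbol{\gamma}} \in \mathcal{L}_\infty$ and $\boldsymbol{e}_\omega \in \mathcal{L}_2$.

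Asymptotic convergence then follows from Barbalat's lemma: boundedness of $\boldsymbol{e}_\omega$, $\tilde{\boldsymbol{\gamma}}$, and the exogenous signals $\boldsymbol{\omega}_d$ and $\boldsymbol{\tau}_{CoM}$ makes $\dot{\boldsymbol{e}}_\omega \in \mathcal{L}_\infty$ via the error dynamics above, and combined with $\boldsymbol{e}_\omega \in \mathcal{L}_2$ this forces $\boldsymbol{e}_\omega(t) \to \boldsymbol{0}$. The hard part will be the non-symmetry of $\boldsymbol{J}_S(\boldsymbol{K}_\omega + \boldsymbol{L}_\omega)$: the product of two symmetric positive-definite matrices is generally not symmetric, so positive definiteness of its symmetric part is not automatic. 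The cleanest resolutions are to pick $\boldsymbol{K}_p$ large enough to dominate the skew part, or to appeal to the diagonal structure of $\boldsymbol{K}_\omega + \boldsymbol{L}_\omega$ together with a near-diagonal $\boldsymbol{J}_S$ so the two matrices nearly commute; a secondary concern is the careful bookkeeping needed to verify the matching identity, given how the off-diagonal inertia elements pair with the $\boldsymbol{\phi}$ components in (\ref{represent}).
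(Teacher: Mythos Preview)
Your proposal is correct and mirrors the paper's proof exactly: the same matching identity $\boldsymbol{\varphi}_\omega^{\intercal}\boldsymbol{\gamma}^{*}=\boldsymbol{J}_S\boldsymbol{\phi}+\boldsymbol{\tau}_{CoM}$, the same Lyapunov candidate $V=\tfrac12\boldsymbol{e}_\omega^{\intercal}\boldsymbol{J}_S\boldsymbol{e}_\omega+\tfrac12\tilde{\boldsymbol{\gamma}}^{\intercal}\boldsymbol{\Gamma}_\omega^{-1}\tilde{\boldsymbol{\gamma}}$, the same cross-term cancellation yielding $\dot V=-\boldsymbol{e}_\omega^{\intercal}[\boldsymbol{K}_p+\boldsymbol{J}_S(\boldsymbol{K}_\omega+\boldsymbol{L}_\omega)]\boldsymbol{e}_\omega$, and the same appeal to Barbalat. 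Your caution about the symmetric part of $\boldsymbol{J}_S(\boldsymbol{K}_\omega+\boldsymbol{L}_\omega)$ is a point the paper actually glosses over---it simply asserts $\dot V<0$ from positive definiteness of $\boldsymbol{J}_S$---so you are being more rigorous than the original on that issue.
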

\begin{proof}
	See Appendix \ref{app:2}.
\end{proof}

In order to avoid adaptive parameter drift, the e-modification technique \cite{narendra1987new} is applied, and (\ref{old_law_rate}) is improved as
\begin{equation}
\dot{\boldsymbol{\gamma}}=\boldsymbol{\Gamma}_\omega\boldsymbol{\varphi}_\omega\boldsymbol{e}_\omega-\boldsymbol{\mu}\Vert \boldsymbol{e}_\omega \Vert\boldsymbol{\gamma},
\end{equation}
where $\boldsymbol{\mu}>0$ represents the e-modification parameter, and $\Vert \cdot \Vert$ represents the Euclidean norm of the vector. e-modification adds a damping term to the adaptive law, which can effectively enhance the robustness of the adaptive control.

\begin{remark}
Compared to the attitude adaptive controller presented in \cite{maki2020model}, the proposed control architecture delegates attitude control to the superior-level controller, which employs quaternion representation to avoid the singularity issue associated with Euler angle representation in \cite{maki2020model}. 
\end{remark}

\begin{remark}
	 As shown in \cite{xu2024novel}, considering actuator dynamics is crucial for control performance, and actuator time-delay effect also limits the adaptation rate \cite{nguyen2018model}. The error feedback term is added to the reference system, balancing the system's response speed and robustness. It allows the controller to achieve reliable control with a small adaptation rate, avoiding the degradation of control performance caused by the actuator time-delay effect under a high adaptation rate.
\end{remark}

\subsection{Control Allocation}
The control allocation module is responsible for mapping the adaptive collective thrust command and adaptive body torque command to the thrust of each rotor. Here the control allocation module based on quadratic programming (QP) is used, similar to \cite{sun2022comparative}:
\begin{equation}
\begin{aligned}
&\min_{\boldsymbol{c}}\ \boldsymbol{e}_f^\intercal \boldsymbol{Q}_f\boldsymbol{e}_f+\boldsymbol{u}^{T}\boldsymbol{R}_f\boldsymbol{u}\\
\mathrm{subject~to}&\quad{u}_{min}\leq u_{i}\leq u_{max},i=1,2,3,4
\end{aligned}
\end{equation}
where $\boldsymbol{e}_f=\boldsymbol{G}\mathbf{u}-[T_a,\boldsymbol{\tau}_a^\intercal]^\intercal$ is the control allocation error. $u_{min}$ and $u_{max}$ are the minimum and maximum values of the rotor thrust input, respectively. The calculated thrust will be sent to the quadrotor for execution.

\section{Numerical Simulation}\label{sec:simulation}
Two types of simulation testing scenarios are provided: static anti-disturbance scenarios and dynamic anti-disturbance scenarios. The static anti-disturbance scenarios test the ability of the quadrotor to quickly stabilize after being disturbed in the hovering state, while the dynamic scenarios test the ability of the quadrotor to quickly recover while tracking the trajectory.
\subsection{Simulation Setup}
The explicit 4th-order Runge-Kutta algorithm is used to simulate the quadrotor dynamics with a step size of 0.5 ms, and the impact collision effects of the payload, the payload uncertainty, and the rotor actuator time-delay effect are simulated according to Sec. \ref{sec:Quadrotor Dynamics}, with $\sigma_a$ set to 25 ms.
	
The nonlinear model reference adaptive control (abbreviated as NMRAC) in \cite{maki2020model} and the $\mathcal{L}_1$ adaptive controller (abbreviated as $\mathcal{L}_1$) in \cite{hanover2021performance} are implemented for comparison. In addition, the control parameters of the proposed controller's NMPC are kept consistent with \cite{hanover2021performance} for fairness.

The total simulation duration is uniformly set to 10 seconds. The payload is set to make a free fall motion with an initial velocity of zero from a height of $h_P$ m. At the 3rd second from the start of the simulation, the payload collides with the quadrotor at the position $(r_{BP,1},r_{BP,2})$ m of the quadrotor's $x,y$-plane. After the collision, the payload is attached to the quadrotor's $x,y$-plane and no longer moves relative to the quadrotor. Considering the length of the quadrotor's arms, $(r_{BP,1},r_{BP,2})$ is uniformly set to $(0.2,0.2)$ m, and let different mass payloads fall from different heights to hit the quadrotor to compare the anti-disturbance performance of controllers.

Furthermore, to simulate real sensor data, Gaussian noise is added to the quadrotor state obtained by the controller. Note that the exponential mapping \cite{sola2017quaternion} is employed to simulate noise in attitude.

\subsection{Simulation Experiments}
\subsubsection{Static Anti-disturbance Scenarios}
To evaluate the proposed controller's ability to quickly stabilize near the equilibrium point, the quadrotor is initially set to a hovering state, and after being impacted, it needs to control its velocity and attitude to resume hovering quickly. Note that it is not mandatory for the quadrotor to return to its initial position here.

The mean absolute error of velocity $\mathrm{MAE}_v$ is used to quantitatively measure the control performance as
	\begin{equation}
	\mathrm{MAE}_v:=\frac{\sum_{k=1}^{N_{sim}}\Vert\boldsymbol{v}_{k}-\boldsymbol{v}_{ref,k}\Vert}{N_{sim}},
	\end{equation}
where $N_{sim}$ is the total number of time steps of the simulation. The smaller $\mathrm{MAE}_v$, the smaller the average error of velocity in time.

\begin{table}[tbp]
	\centering
	\caption{Comparison of velocity mean absolute error in static anti-disturbance scenarios}
	\label{table:static}
	\begin{tabular}{|cc|ccc|}
		\hline
		\multicolumn{2}{|c|}{Setup}                            & \multicolumn{3}{c|}{$\mathrm{MAE}_v$ [cm/s]}        \\ \hline
		\multicolumn{1}{|c|}{$m_P$[kg]}             & $h_P$[m] & \multicolumn{1}{c|}{this paper}            & \multicolumn{1}{c|}{$\mathcal{L}_1$ \cite{hanover2021performance}} & NMRAC \cite{maki2020model}  \\ \hline
		\multicolumn{1}{|c|}{\multirow{3}{*}{0.2}} & 0.2      & \multicolumn{1}{c|}{5.776}           & \multicolumn{1}{c|}{\textbf{5.132}}  & 12.418 \\ \cline{2-5} 
		\multicolumn{1}{|c|}{}                      & 0.5      & \multicolumn{1}{c|}{7.237}           & \multicolumn{1}{c|}{\textbf{6.367}}  & 14.736 \\ \cline{2-5} 
		\multicolumn{1}{|c|}{}                      & 0.8      & \multicolumn{1}{c|}{\textbf{8.535}}  & \multicolumn{1}{c|}{8.951}           & 18.137 \\ \hline
		\multicolumn{1}{|c|}{\multirow{3}{*}{0.5}}  & 0.2      & \multicolumn{1}{c|}{\textbf{9.865}}  & \multicolumn{1}{c|}{12.837}          & 16.706 \\ \cline{2-5} 
		\multicolumn{1}{|c|}{}                      & 0.5      & \multicolumn{1}{c|}{\textbf{14.398}} & \multicolumn{1}{c|}{21.613}          & 36.389 \\ \cline{2-5} 
		\multicolumn{1}{|c|}{}                      & 0.8      & \multicolumn{1}{c|}{\textbf{16.584}} & \multicolumn{1}{c|}{35.285}          & 55.338 \\ \hline
	\end{tabular}
\end{table}

\begin{figure}[tbp]
	\centering
	\includegraphics[scale=0.5, trim=0.6cm 1.2cm 1.5cm 2.5cm, clip]{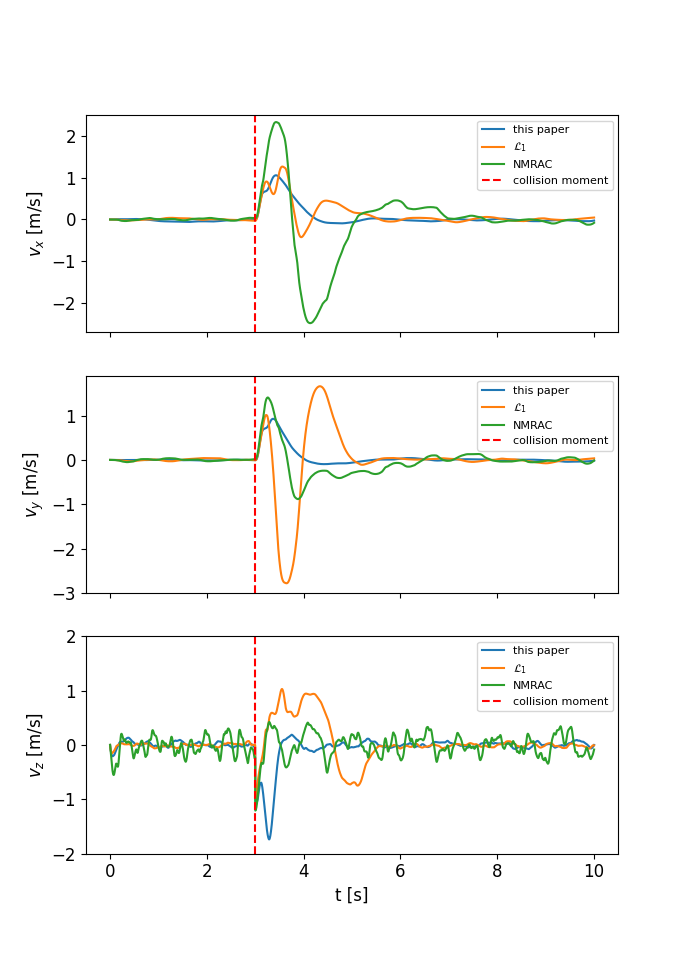}%
	\caption{Comparison results of velocity control performance of the controllers with a 0.5 kg payload dropped from 0.8 m height in static anti-disturbance scenarios}
	\label{fig:v_traj_static}
	\vspace{-10pt}
\end{figure}

Under different disturbance settings, the comparison results of the velocity control performance of various controllers are presented in Table \ref{table:static}. The proposed controller exhibits performance improvements compared to the other controllers in most settings. Only for the payload mass of 0.2 kg and the drop height of 0.2 m and 0.5 m, the mean absolute error of the proposed controller is slightly higher than that of the $\mathcal{L}_1$ controller. This may be due to the $\mathcal{L}_1$ controller directly outputting rotor thrust, resulting in finer control under small disturbances. Nevertheless, in other scenarios with larger disturbances, the proposed controller demonstrates significant performance enhancements. Specifically, when the payload mass is 0.5 kg and the drop height is 0.8 m, the $\mathrm{MAE}_v$ of the proposed controller is reduced by 53.00\% compared to the $\mathcal{L}_1$ controller and by 70.03\% compared to the NMRAC controller. Fig. \ref{fig:v_traj_static} displays the velocity control performance of various controllers in that case. The proposed controller converges in a smoother and faster manner compared to the other controllers. 
\begin{table}[tbp]
	\vspace{-15pt}
	\centering
	\caption{Comparison of position mean absolute error in dynamic anti-disturbance scenarios}
	\label{table:dynamic}
	\begin{tabular}{|cc|ccc|}
		\hline
		\multicolumn{2}{|c|}{Setup}                           & \multicolumn{3}{c|}{$\mathrm{MAE}_p$ [cm]}         \\ \hline
		\multicolumn{1}{|c|}{$m_P$[kg]}            & $h_P$[m] & \multicolumn{1}{c|}{this paper}           & \multicolumn{1}{c|}{$\mathcal{L}_1$ \cite{hanover2021performance}} & NMRAC \cite{maki2020model}  \\ \hline
		\multicolumn{1}{|c|}{\multirow{3}{*}{0.2}} & 0.2      & \multicolumn{1}{c|}{\textbf{2.687}} & \multicolumn{1}{c|}{3.317}           & 19.244 \\ \cline{2-5} 
		\multicolumn{1}{|c|}{}                     & 0.5      & \multicolumn{1}{c|}{\textbf{3.563}} & \multicolumn{1}{c|}{3.603}           & 19.101 \\ \cline{2-5} 
		\multicolumn{1}{|c|}{}                     & 0.8      & \multicolumn{1}{c|}{\textbf{4.076}} & \multicolumn{1}{c|}{4.135}           & 19.251 \\ \hline
		\multicolumn{1}{|c|}{\multirow{3}{*}{0.5}} & 0.2      & \multicolumn{1}{c|}{\textbf{3.654}} & \multicolumn{1}{c|}{6.424}           & failed \\ \cline{2-5} 
		\multicolumn{1}{|c|}{}                     & 0.5      & \multicolumn{1}{c|}{\textbf{5.701}} & \multicolumn{1}{c|}{11.931}          & failed \\ \cline{2-5} 
		\multicolumn{1}{|c|}{}                     & 0.8      & \multicolumn{1}{c|}{\textbf{8.994}} & \multicolumn{1}{c|}{14.680}          & failed \\ \hline
	\end{tabular}
	\vspace{-10pt}
\end{table}

\begin{figure}[tbp]
	\centering
	\begin{tabular}{@{\extracolsep{\fill}}c@{}c@{\extracolsep{\fill}}}
		\includegraphics[width=0.25\textwidth, trim=3.5cm 2.0cm 1.8cm 2.5cm, clip]{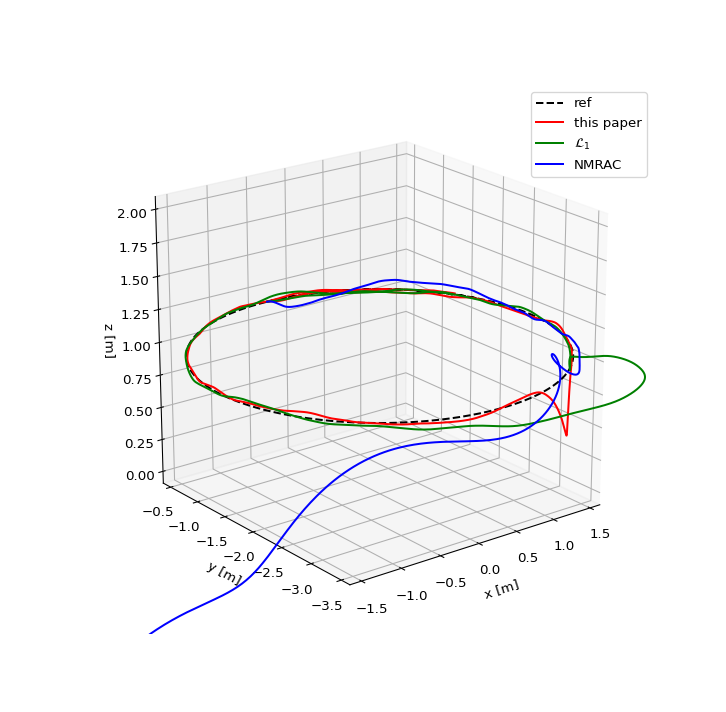} &
		\includegraphics[width=0.23\textwidth, trim=0.0cm 0.0cm 1.0cm 1cm, clip]{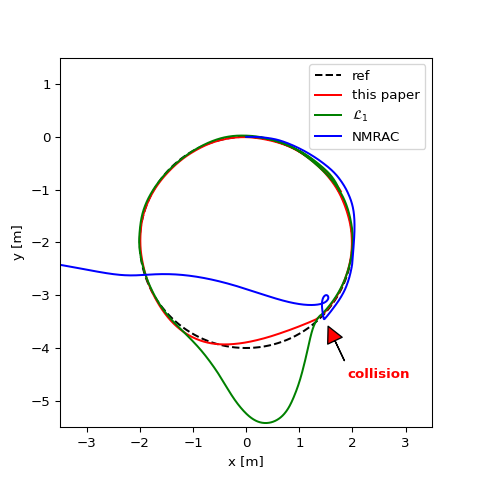}\\
		(a) 3D view & (b) upper view\\
	\end{tabular}
	\caption{Comparison results of tracking performance of the controllers with a 0.5 kg payload dropped from 0.8 m height in dynamic anti-disturbance scenarios}
	\label{fig:track_results}
	\vspace{-15pt}
\end{figure}
\subsubsection{Dynamic Anti-disturbance Scenarios}
To measure the anti-disturbance capability of the proposed controller when the quadrotor is tracking the trajectory, the quadrotor is set to track a circular trajectory with a radius of 2 m at a speed of 1.5 m/s. After colliding with the payload, the quadrotor needs to resist the disturbance and continue to track the circular trajectory. The mean absolute error $\mathrm{MAE}_p$ of the position over the simulation time is used as
	\begin{equation}
	\mathrm{MAE}_p:=\frac{\sum_{k=1}^{N_{sim}}\Vert\boldsymbol{p}_{k}-\boldsymbol{p}_{ref,k}\Vert}{N_{sim}},
	\end{equation}
	
Table \ref{table:dynamic} provides the comparative results of the tracking performance of various controllers in dynamic anti-disturbance scenarios. The proposed controller demonstrates superior disturbance rejection performance even during motion compared to other controllers. Additionally, since NMRAC is based on Euler angles, under large payload disturbances, this leads to control divergence due to singularity issues. Fig. \ref{fig:track_results} illustrates the trajectory tracking results of various controllers when the payload mass is 0.5 kg and the drop height is 0.8 m. For the $\mathcal{L}_1$ controller, the error at the moment of collision does not converge quickly, resulting in a significant deviation from the reference trajectory. In contrast, the proposed controller responds rapidly, compensating for the disturbance before the error continues to increase, thereby enabling a faster recovery.

\section{Conclusion}\label{sec:conclusion}
This paper proposes to use NMPC as a superior-level controller to overcome impact disturbances. Thrust MRAC and Angular Rate MRAC serve as lower-level controllers, designed to resist disturbances in the direction of the collective thrust and disturbances caused by parametric uncertainties in rotational dynamics, respectively. The superior disturbance rejection performance of the proposed controllers has been verified in different simulation scenarios.

\appendix
\subsection{Disturbance Analysis}\label{app:disturbance}
\subsubsection{Analysis of Impact Collision}\label{app:Impact}
According to the linear impulse-momentum theorem and the angular impulse-momentum theorem \cite{gilardi2002literature}, one can obtain that
\begin{align}
m\left(\boldsymbol{v^+}-\boldsymbol{v^-}\right)&={I}\boldsymbol{n}_q \label{collision_1},\\
m_{P}\left(\boldsymbol{v}^+_{P}-\boldsymbol{v}^-_{P}\right)&=-{I}\boldsymbol{n}_q \label{collision_2},\\
\boldsymbol{J}\cdot(\boldsymbol{\omega^+}-\boldsymbol{\omega^-})&={I}\boldsymbol{r}_{BP} \times \boldsymbol{n}_q  \label{collision_3},
\end{align}
where $\boldsymbol{v}_P$ represents the velocity of the payload.

Assume that a perfectly inelastic collision occurs at the collision point, that is, the velocity of the quadrotor at the collision point is the same as the payload velocity after the collision:
\begin{equation}
\boldsymbol{v^+}+\boldsymbol{\omega^+}\times \boldsymbol{r}_{BP}=\boldsymbol{v}^+_{P} \label{collision_4}.
\end{equation}
By combining (\ref{collision_1}) to (\ref{collision_4}), it yields
\begin{equation}
I=\frac{1}{\mu}(\boldsymbol{v}^-_{P}-\boldsymbol{v^-}-\boldsymbol{\omega^-}\times \boldsymbol{r}_{BP})\cdot\boldsymbol{n}_q,
\end{equation}
where $\mu=\frac1m+\frac1{m_P}+[\boldsymbol{J}^{-1}\cdot(\boldsymbol{r}_{BP}\times\boldsymbol{n}_q)\times\boldsymbol{r}_{BP}]\cdot\boldsymbol{n}_q$.

The velocity and angular rate of the quadrotor after the collision can be obtained as (\ref{impact:v}) and (\ref{impact:omega}), respectively.

\subsubsection{Analysis of Payload Uncertainty}\label{app:Payload}
Due to the shift of the center of mass, gravity will generate a torque $\boldsymbol{\tau}_{CoM}$ on the geometric center of the quadrotor as
\begin{equation}
\boldsymbol{\tau}_{CoM}=\boldsymbol{r}_{BM}\times(m+m_P)\boldsymbol{g}\\=\begin{bmatrix}-r_{BM,2}\\
r_{BM,1}\\
0\end{bmatrix}\cdot(m+m_P)g,
\end{equation}
where $\boldsymbol{r}_{BM}$ can be obtained according to the definition of the center of mass:
\begin{equation}
\boldsymbol{r}_{BM}=\frac{m}{m+m_P}\boldsymbol{r}_{BP}.
\end{equation}

\subsection{Proof of Theorem \ref{thm:1}}\label{app:1}
\begin{proof}
Substituting (\ref{scaling}) into (\ref{smodel_3}) and subtracting (\ref{v_ref}), one can obtain that
\begin{equation}
\dot{\boldsymbol{e}}_v=-\boldsymbol{L}_t\cdot\boldsymbol{e}_v+b\tilde{k}_t\boldsymbol{t}_d
\label{e_v_dot}
\end{equation}
where $\tilde{k}_t={k}_t-{k}_t^*$, $k_t^*=\frac{m+m_P}{m}$ means the ideal gain when the tracking error converges and  $b=\frac{1}{m+m_P}>0$ is the unknown parameter.

Let us consider the following Lyapunov function as
\begin{equation}
V_t=\frac12\boldsymbol{e}_v^\intercal\boldsymbol{e}_v+\frac{b}{2\Gamma_t}\tilde{k}_t^2.
\end{equation}

Taking the derivative and substituting (\ref{kt_ada}) and (\ref{e_v_dot}) into it, it yields
\begin{equation}
\begin{aligned}
\dot{V}_t&=\boldsymbol{e}_v^\intercal(-\boldsymbol{L}_t\cdot\boldsymbol{e}_v+b\tilde{k}_t\boldsymbol{t}_d)+\frac{b}{\Gamma_t}\tilde{k}_t(-\Gamma_t\boldsymbol{e}_v^\intercal\boldsymbol{t}_d)\\&=-\boldsymbol{e}_v^\intercal\cdot\boldsymbol{L}_t\cdot\boldsymbol{e}_v<0.
\end{aligned}
\end{equation}

The asymptotic convergence of the tracking error can then be proved by applying Barbalat's lemma \cite{nguyen2018model}.
\end{proof}

\subsection{Proof of Theorem \ref{thm:2}}\label{app:2}
\begin{proof} 
Multiplying both sides of \ref{omega_ref} by $\boldsymbol{J}_{S}$ on the left gives
\begin{equation}
\boldsymbol{J}_{S}\dot{\boldsymbol{\omega}}_m=\boldsymbol{J}_{S}\boldsymbol{K}_\omega({\boldsymbol\omega}_d-\boldsymbol\omega_m)+\boldsymbol{J}_{S}\boldsymbol{L}_\omega{\boldsymbol{e}}_\omega.
\label{JS_ref}
\end{equation}
Subtracting (\ref{JS_ref}) from (\ref{simplified_rot_model}), it can be derived that
\begin{equation}
\boldsymbol{J}_{S}\dot{{\boldsymbol{e}}}_\omega=-\boldsymbol{J}_{S}\boldsymbol{L}_\omega{{\boldsymbol{e}}}_\omega-\boldsymbol{J}_{S}\boldsymbol{K}_\omega({\boldsymbol\omega}_d-\boldsymbol\omega_m)+\boldsymbol{\tau}_a+\boldsymbol{\tau}_{CoM}.
\end{equation}
Adding $\boldsymbol{J}_{S}\boldsymbol{K}_\omega{{\boldsymbol{e}}}_\omega$ on both sides gives
\begin{equation}
\boldsymbol{J}_{S}\dot{{\boldsymbol{e}}}_\omega+\boldsymbol{J}_{S}\boldsymbol{K}_\omega{{\boldsymbol{e}}}_\omega=-\boldsymbol{J}_{S}\boldsymbol{L}_\omega{{\boldsymbol{e}}}_\omega+\boldsymbol{J}_{S}\boldsymbol{K}_\omega({\boldsymbol\omega}-\boldsymbol\omega_d)+\boldsymbol{\tau}_a+\boldsymbol{\tau}_{CoM}. \label{mrac_rate_model}
\end{equation}

From (\ref{represent}), one can obtain that
\begin{equation}
\boldsymbol{\varphi}_\omega^T\boldsymbol{\gamma}^*=\boldsymbol{J}_{S}\boldsymbol{\phi}+\boldsymbol{\tau}_{CoM}.
\label{rate_mrac_dot_e_omega}
\end{equation}
Substituting (\ref{rate_mrac_dot_e_omega}) into (\ref{mrac_rate_model}), it can be obtained that
\begin{equation}
\boldsymbol{J}_{S}\dot{{\boldsymbol{e}}}_\omega+\boldsymbol{J}_{S}\boldsymbol{K}_\omega{{\boldsymbol{e}}}_\omega=-\boldsymbol{J}_{S}\boldsymbol{L}_\omega{{\boldsymbol{e}}}_\omega+\boldsymbol{\tau}_a+\boldsymbol{\varphi}_\eta^T\boldsymbol{\gamma}^*. \label{rate_mrac_2}
\end{equation}
Define $\tilde{\gamma}=\gamma-\gamma^*$ and introduce the following Lyapunov function:
\begin{equation}
V_\omega=\frac12\boldsymbol{e}_\omega^\intercal\boldsymbol{J}_{S}\boldsymbol{e}_\omega+\frac12\tilde{\gamma}^\intercal\Gamma_\omega^{-1}\tilde{\gamma}.
\end{equation}
Taking the derivative and substituting (\ref{rate_mrac_2}) into it, one has
\begin{equation}
\dot{V}_\omega=\boldsymbol{e}_\omega^\intercal(-\boldsymbol{J}_{S}\boldsymbol{K}_\omega{{\boldsymbol{e}}}_\omega-\boldsymbol{J}_{S}\boldsymbol{L}_\omega{{\boldsymbol{e}}}_\omega+\boldsymbol{\tau}_a+\boldsymbol{\varphi}_\omega^T\boldsymbol{\gamma}^*)+\tilde{\gamma}^\intercal\Gamma_\omega^{-1}\dot{{\gamma}}. \label{v_dot}
\end{equation}
Substituting (\ref{old_law_rate}) into (\ref{v_dot}), it can be deduced that
\begin{equation}
\begin{aligned}
\dot{V}_\omega&=-\boldsymbol{e}_\omega^\intercal\boldsymbol{J}_{S}\boldsymbol{K}_\omega{{\boldsymbol{e}}}_\omega-\boldsymbol{e}_\omega^\intercal\boldsymbol{J}_{S}\boldsymbol{L}_\omega{{\boldsymbol{e}}}_\omega-\boldsymbol{e}_\omega^\intercal\boldsymbol{K}_p\boldsymbol{e}_\omega-\boldsymbol{e}_\omega^\intercal\boldsymbol{\varphi}_\omega^T\tilde{\boldsymbol{\gamma}}\\&+\tilde{\gamma}^\intercal\boldsymbol{\varphi}_\omega\boldsymbol{e}_\omega\\
&=-\boldsymbol{e}_\omega^\intercal\boldsymbol{J}_{S}\boldsymbol{K}_\omega{{\boldsymbol{e}}}_\omega-\boldsymbol{e}_\omega^\intercal\boldsymbol{J}_{S}\boldsymbol{L}_\omega{{\boldsymbol{e}}}_\omega-\boldsymbol{e}_\omega^\intercal\boldsymbol{K}_p\boldsymbol{e}_\omega<0.
\end{aligned}
\end{equation}
Note that the positive definiteness of $\boldsymbol{J}_{S}$ is used here. Using Barbalat’s lemma, it can be shown that the tracking error converges asymptotically, but the convergence of the parameter vector $\gamma$ is not guaranteed here \cite{nguyen2018model}.
\end{proof}

\bibliographystyle{IEEEtran}
\bibliography{reference}

\end{document}